\documentclass[
 reprint,
 amsmath,
 amssymb,
 aps,
]{revtex4-1}           
	\usepackage{array}
	\usepackage{amsthm}
	\usepackage{xcolor}
	\usepackage{tikz}
	\usepackage{amssymb}
	\usepackage{graphicx}
	\usepackage{hyperref}
	\hypersetup{
     colorlinks=true,
     linkcolor=blue,
     filecolor=blue,
     citecolor=magenta,      
     urlcolor=cyan,
     }
	\usepackage{mathtools}
	\usepackage{subcaption}
	\newtheorem{theorem}{Theorem}
	\newtheorem{definition}{Definition}
	\newtheorem{proposition}{Proposition}
%
%
%
%
\bibliographystyle{nature}
%
%
%
%
\begin{document}
%
%
%
%
\title{On how generalised entropies without parameters impact information optimisation processes}
\author{Jes\'us Fuentes}\email{j.fuentesaguilar@ugto.mx}
\author{Octavio Obreg\'on} 
\affiliation{Departamento de F\'{\i}sica, Divisi\'on de Ciencias e Ingenier\'ias Campus Le\'on,
  \\ Universidad de Guanajuato, A.P. E-143, C.P. 37150, Le\'on, Guanajuato, M\'exico.}
%
%
%
%
\begin{abstract}

As an application of generalised statistical mechanics, it is studied a possible route toward a consistent generalised information theory in terms of a family of non-extensive, non-parametric entropies $H^\pm_D(P)$. Unlike other proposals based on non-extensive entropies with a parameter dependence, our scheme is  asymptotically equivalent to the one formulated by Shannon, while it differs in regions where the density of states is reasonably small, which leads to information distributions constrained to their background. Two basic concepts are discussed to this aim. First, we prove two effective coding theorems for the entropies $H^\pm_D(P)$. Then we calculate the channel capacity of a binary symmetric channel (BSC) and a binary erasure channel (BEC) in terms of these entropies. We found that processes such as data compression and channel capacity maximisation can be improved in regions where there is a low density of states, whereas for high densities our results coincide with the Shannon's formulation.

\begin{description}
\item[PACS numbers]
02.50.-r; 89.70.+c; 05.40.Ca
\item[keywords] data compression, coding theorems, non-extensive entropies
\end{description}
\end{abstract}
%
%
%
%
\maketitle
%
%
%
%
\section{Introduction}

The first rigorous study on data compression was published by Claude Shannon in 1948 \cite{shannon}, in a seminal paper that constituted a firm setting in the foundations of classical information theory. In that work it was considered a $D$-ary alphabet as well as a collection of uncorrelated random sources of letters (whose probabilities depend only on the $k$ letters that immediately precede them) with the aim to examine to what extent could one compress $N$ symbols emitted from a particular source. Shannon found that any process of data compression results limited by the entropy itself such that a codeword whose average length attains the entropy can be regarded as optimal. In other words, this result led to quantify the average amount of information (absence of redundancy) by simply measuring the entropy.

Nevertheless, the entropy as an information measure may have different definitions in its functional form and conceptual purport. Although, in general, we can classify them into two different groups: extensive and non-extensive entropies. The class of non-extensive entropies is typically related to statistical systems out of equilibrium and  whose components are interacting with themselves or with external agents. These systems are often referred as non-equilibrium processes, and play a direct role in modern trends of information theory \cite{crisanti}. In this sense, a non-extensive entropy can be understood as a generalisation to the (Boltzmann-Gibbs) Shannon's entropy
\begin{equation*}
H_D^S(P)=-\sum_ip(x_i)\ln(p(x_i)),
\end{equation*}
since the latter succeeds as long as a large system can be treated in equilibrium. Which shall not be disregarded given the equilibrium configuration is realisable for a large number of situations up to a good approximation.  

Thus the question whether the introduction of non-extensive entropies in those systems that were hardly approximated in equilibrium can really bring some significant novelty, has a positive answer and, in fact, is worth attention even when the description of the problem turns complicated via a non-extensive measure. 

In information theory the story is not different. Below we shall show how the rate of data compression for distributions associated with small systems (i.e. setups with a few number of accessible micro states $N$) can attain a more efficient bound compared with the standard formulation if the process is accomplished by means of the non-extensive entropy $H_D^-(P)$ application, whereas $H_D^-(P)$ and $H_D^S(P)$ will tend to coincide as the number of the accessible states $N$ grows. Actually, the entropies $H_D^\pm(P)$ and $H_D^S(P)$ are asymptotically equivalent as seen from Fig. \ref{f:plot}, see Ref. \cite{paper} for further details. This means that the entropies $H_D^\pm(P)$ truly resemble the standard theory whenever the probability distribution $P$ concerns a large system. 

A similar discussion will be addressed to the computation of the channel capacity of a BSC and a BEC.  In the first case, we will show how the entropy $H_D^+(P)$ improves the outcome obtained from $H_D^S(P)$, recalling that the partial optimisation will be valid in the regime of small systems. In the case of a BEC, there is an interesting switching on the obtained bounds, going from upper to lower and vice versa, depending on what ratio of errors is present over the communication channel. 

In turn, we shall introduce the functional form of the entropies $H_D^\pm(P)$:
\begin{equation}
\label{entropies}
H_D^{\pm}(P)=-\sum_i^{N}p(x_i)\log^{\pm}_{D}(p(x_i)),
\end{equation}
where $\log^{\pm}$ are generalised logarithms (see Appendix \ref{app}) and $\{p(x_1),\ldots,p(x_N)\}\in P$ are the probabilities of the codewords  $x_1,\ldots,x_N$ emitted by an uncorrelated random source $X$. The subindex $D$ refers to the size of the alphabet, e.g. a binary alphabet has $D=2$ symbols. 

The information measures \eqref{entropies} were originally proposed by one of us in \cite{oo10}, within the superstatistics framework \cite{beck} and years after, their  quantum equivalents were obtained by means of a generalised replica trick \cite{oo18}, i.e. a useful method to compute the partition function in spin crystals.

Indeed, the entropies $H_D^\pm(P)$ demand another generalised quantities as we will see below. For instance, we cannot find effective coding theorems in terms of these measures if the constraint $\sum_i^N D^{-l_i}\leq1$ is imposed to minimise the corresponding codeword length, which leads to the necessity of a generalised constraint, as stated in Prop. \ref{p1}.

On the basis of the entropies \eqref{entropies}, a third one can be observed as well, expressly $H_D^{0}(P)=\frac{1}{2}\left[H_D^{+}(P)+H_D^{-}(P)\right]$. Yet along this paper we are not to analyse explicitly this latter measure, but only the fundamental entropies $H_D^\pm(P)$, nonetheless, given that $H_D^{0}(P)$ is an average of the other two entropy measures, the following discussion directly comprises this case. 

Note that $H_D^\pm(P)$ do not depend on any free parameter but only on the probability distribution. Both non-extensive entropies, furthermore, resemble the Shannon's entropy at first order, which is evident from the following series expansion:
\begin{equation}
\begin{split}
-p&(x_i)\log^{\pm}_{D}(p(x_i)) = -p(x_i)\log_D(p(x_i)) \\ 
&\mp \frac{1}{2}[p(x_i)\log_D(p(x_i))]^2 - \frac{1}{6}[p(x_i)\log_D(p(x_i))]^3 + \cdots 
\end{split}
\end{equation}
due to $p(x_i) \in [0,1]$ the higher order terms become subdominant but their actual contribution is not entirely negligible in a regime of high probabilities.

The special case of a uniform distribution $p(x_i)=\frac{1}{N}$ gives us further information about the behaviour of $H_D^\pm(P)$. Substituting this distribution into \eqref{entropies} yields $H_D^{\pm}(P)=\pm N \mp N^{1\mp\frac{1}{N}}$, and its series expansion:
\begin{equation}
\label{equi}
H_D^{\pm}(P) = \sum_i^\infty(\mp1)^{i+1}\frac{\log^iN}{i!N^{i-1}},
\end{equation}
which produces the plots in Fig. \ref{f:plot}. Our previous discussion can be summarised in that figure. 

From an intuitive viewpoint, one could ask to what degree could the standard information theory be modified by implementing the information measures \eqref{entropies}, since Fig. \ref{f:plot} tells in advance the existence of upper and lower bounds on the Shannon's entropy in certain regions of $N$.

\begin{figure}[!h]	
	\centering
	\includegraphics[width=0.5\textwidth]{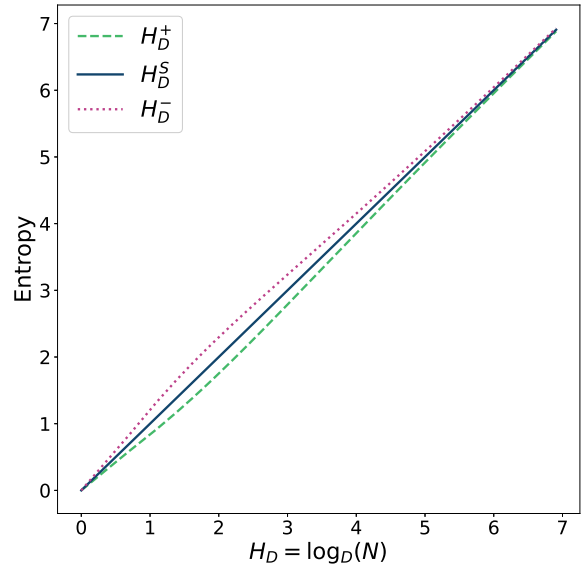}
	\caption{Behaviour of the entropies \eqref{entropies} with respect to the Shannon's entropy in terms of a uniform distribution $p(x_i)=\frac{1}{N}$ for all $x_i\in X$.}
	\label{f:plot}
\end{figure}

As well, it is worthy of note that the Shannon's entropy and the entropies $H_D^\pm(P)$ are written in the fashion of  $H_D(P)=\sum_i^NG_D(p(x_i))$, where $G_D(x)$ is a strictly monotonic increasing function known as the entropic form. There are other entropy measures that adopts such structure, such as the Tsallis's $q$-entropy \cite{tsallis} (see also \cite{havrda}), which has been widely studied in different disciplines \cite{physica}. Regarding information theory, a coding theorem in terms of the $q$-entropy has been found by means of an algebraic method \cite{chapeau}. Yet, it has been argued in \cite{abe10,oikonomou}, that its free parameter $q$ may be an obstacle in the theoretical explanation of some problems, as well as its divergence from the standard  entropy for a large number of symbols (or micro states) even if $q\approx1$. The hassle of free parameters is absent from $H_D^\pm(P)$, also it has been proved in \cite{paper} that $H_D^\pm(P)$ are stable measures for any $N$ and any $P$, thus we think such entropies are good candidates for a generalised information theory.
 
The structure of our work is as follows. In Sec. \ref{s:theorems} we obtain coding theorems for the two entropies defined in \eqref{entropies} and discuss the implications of the resulted data compression in comparison with the Shannon's entropy. Sec. \ref{s:capacity} is devoted to the computation of the channel capacity of a BSC and a BEC, we compare  our results with the ones in the standard theory. Finally, our conclusions and general outlook are presented in Sec. \ref{s:conclusions}.

%
%
%
%
\section{Effective coding theorems}
\label{s:theorems}

Below we shall present coding theorems in terms of the non-extensive entropies \eqref{entropies}. To this aim, a variational method is applied to find the average  length that a codeword can attain by compressing the data subject to an adequate constraint. This constraint is not arbitrary. As we will show there is an intertwining between the entropy measure and the minimisation constraint such that the functional structure of the entropy is preserved. 

For our purposes, in the following discussion we assume that the communication channel is perfectly noiseless, that is we shall neglect any external perturbation. As well, we take for granted that the codes are prefix-free, i.e. there is no a single codeword which is the initial segment of any other available codeword in the set generated by $X$.

The aim of the method reported below is to find codes that minimise a monotonic increasing function of the form
\begin{equation}
\label{long}
L = \varphi^{-1}\left(\sum_i^{N} p(x_i) \varphi(l(x_i)) \right),
\end{equation}
where $l(x_i)\in\mathbb{Z}_+$ is the length of the $i$-th codeword related to the Nagumo-Kolmogorov function $\varphi$ such that the cost of managing a sequence of length $l(x_i)$ is $\varphi(l(x_i)):\mathbb{R}_+\to\mathbb{R}_+$. The quantity defined in \eqref{long} is called the {\it average  length for the cost function} $\varphi$, in accordance with Campbell \cite{campbell0}, however, for brevity we refer to it simply as the average length.

A number of cost functions $\varphi$ have been studied in \cite{baer}, although for most entropies of the generalised form, we have that $\varphi(z)=z$. For instance, in the case of the Shannon's entropy, the average length $L_S=\varphi^{-1}(\sum_i^{N}p_S(x_i)\varphi(l_S(x_i)))=\sum_i^{N}p_S(x_i)l_S(x_i)$ can be optimised via the functional
\begin{equation}
\label{class}
J_S=L_S +\gamma K_S,
\end{equation}
where $\gamma\in\mathbb{R}$ is an undetermined Lagrange multiplier and the constraint 
\begin{equation}
\label{kraft}
K_S=\sum_i^{N} D^{-l_S(x_i)}\leq1,
\end{equation}
is the Kraft (McMillan) inequality \cite{cover}, accounting for the fundamental control function in the optimisation of this case. 

To solve the optimisation problem in \eqref{class}, one merely differentiate $J_S$ with respect to $l_S(x_i)$ and equates to zero, implying that the rule to find the extrema of \eqref{class} is the same for every $x_i\in X$. After the elementary calculation one finds:
\begin{equation}
l_S^*(x_i)=-\log_Dp_S(x_i),
\end{equation}
where the superscript * indicates that such average length is regarded as optimal, namely it solves the problem \eqref{class}.

Below we are to proceed in an analogous way but in terms of the entropy measures \eqref{entropies}. Nonetheless, a generalisation to \eqref{kraft} shall be devised to optimise the mean lengths related to $H_D^\pm(P)$. 

This is an interesting aspect of our scheme, for the quantity to be minimised will grant a different rate of data compression than the estimated by the classical theory.

Hence, in our case, the problem consists of choosing codes that become minima when they are subject to the quantities in the following definition.

\begin{proposition}\label{p1}
The entropies $H_D^\pm(P)$ define an optimisable functional of the class \eqref{class}  
\begin{equation}
\label{functional1}
J_{\pm}=L_{\pm}+\gamma K_\pm
\end{equation}
where $\gamma$ is a Lagrange multiplier and the average lengths $L_\pm$ are univocally determined by the constraints:
\begin{equation}
\label{constraint1}
K_\pm =\sum_i^{N}\sum_j^{\infty} a_{\pm}(j) \Gamma[j+1,-\log D^{-l_\pm(x_i)}]\leq \text{const.},
\end{equation}
the individual lengths $l_\pm(x_i)$ will eventually be related to the probability distributions $p_\pm(x_i)$, the real coefficients $a_{\pm}(j)$ are given in Appendix \ref{app} and 
\begin{equation*}
\Gamma(y,x) =\int_x^\infty dz\,z^{y-1}e^{-z}, 
\end{equation*}
is the incomplete gamma function.
\end{proposition}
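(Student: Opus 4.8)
The plan is to mirror, step by step, the variational argument already carried out for $H_D^S(P)$ around Eqs.~\eqref{class}--\eqref{kraft}, and to reverse-engineer the constraint \eqref{constraint1} from the single demand that the stationarity condition reproduce the entropic structure of \eqref{entropies}. Taking $\varphi(z)=z$ as for the other generalised-form entropies, I set $L_\pm=\sum_i^{N}p_\pm(x_i)\,l_\pm(x_i)$ and observe that the natural target for the optimal lengths is $l_\pm^{*}(x_i)=-\log_D^{\pm}(p_\pm(x_i))$: inserting it into $L_\pm$ returns precisely $H_D^{\pm}(P)$ by \eqref{entropies}, the generalised counterpart of ``optimal average length equals entropy''. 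The whole task then reduces to exhibiting a control function $K_\pm$ whose Lagrange condition $\partial J_\pm/\partial l_\pm(x_i)=0$ singles out exactly this length-probability relation.

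To construct $K_\pm$ I would pass to the generalised exponential $\mathcal{E}^{\pm}$, the functional inverse of $-\log_D^{\pm}$, so that the target relation reads $p_\pm(x_i)=\mathcal{E}^{\pm}(l_\pm(x_i))$, the exact analogue of $p_S(x_i)\propto D^{-l_S(x_i)}$ in the Shannon case. Using the series of $\log_D^{\pm}$ recorded in Appendix~\ref{app}, $\mathcal{E}^{\pm}$ admits an expansion of the form $\mathcal{E}^{\pm}(l)=(\ln D)\,D^{-l}\sum_j a_\pm(j)\,(-\log D^{-l})^{j}$, i.e. the Shannon kernel $D^{-l}$ dressed by polynomials in $l$ with coefficients $a_\pm(j)$. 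I then define $K_\pm$ as the tail functional $K_\pm=\sum_i^{N}\int_{l_\pm(x_i)}^{\infty}\mathcal{E}^{\pm}(l)\,dl$, so that by construction $\partial K_\pm/\partial l_\pm(x_i)=-\mathcal{E}^{\pm}(l_\pm(x_i))$. The one computation that does the work is the elementary identity $\frac{d}{dl}\Gamma(j+1,-\log D^{-l})=-(\ln D)(-\log D^{-l})^{j}D^{-l}$; integrating it term by term converts the tail of each monomial of $\mathcal{E}^{\pm}$ into an incomplete gamma function and delivers exactly \eqref{constraint1}.

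The verification is then direct. Stationarity of \eqref{functional1} gives $p_\pm(x_i)=-\gamma\,\partial K_\pm/\partial l_\pm(x_i)=\gamma\,\mathcal{E}^{\pm}(l_\pm(x_i))$, which collapses to the sought relation $l_\pm^{*}(x_i)=-\log_D^{\pm}(p_\pm(x_i))$ once $\gamma$ is fixed by saturating the constraint, and substituting back returns $L_\pm=H_D^{\pm}(P)$. I would then check the two facts that certify \eqref{constraint1} as the correct generalisation of Kraft's inequality: retaining only $j=0$ uses $\Gamma(1,-\log D^{-l})=D^{-l}$, so $K_\pm$ reduces to $a_\pm(0)\sum_i^{N}D^{-l_\pm(x_i)}$ and the scheme falls back on \eqref{kraft}; and since $\mathcal{E}^{\pm}$ is positive and strictly decreasing, $K_\pm$ is convex in each $l_\pm(x_i)$ while $L_\pm$ is linear, so $J_\pm$ has a unique minimiser. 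This last point, together with the strict monotonicity of $\log_D^{\pm}$, makes the map $l_\pm(x_i)\leftrightarrow p_\pm(x_i)$ a bijection, which is precisely the assertion that $L_\pm$ is \emph{univocally} determined.

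The main obstacle I anticipate is analytic rather than conceptual: justifying the interchange of the infinite sum over $j$ with the differentiation and the tail integration, and proving that the series for $\mathcal{E}^{\pm}$ (hence for $K_\pm$) converges on the admissible range $l_\pm(x_i)\geq 1$ so that the bound in \eqref{constraint1} is finite and attainable. Controlling the growth and signs of the correction coefficients $a_\pm(j)$ is also what is needed to guarantee the strict monotonicity behind ``univocally determined'', because $\log_D^{\pm}$ departs from $\log_D$ exactly in the low-density regime where those corrections dominate. Finally, as in Shannon's theorem one must descend from the real stationary lengths to admissible integers $l_\pm(x_i)\in\mathbb{Z}_+$ by the usual rounding argument; I expect this to cost only the customary additive constant and to leave intact the asymptotic coincidence with $H_D^{S}(P)$ displayed in Fig.~\ref{f:plot}.
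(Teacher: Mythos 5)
Your proposal is correct and follows essentially the same route as the paper: the identity $\tfrac{d}{dl}\Gamma(j+1,-\log D^{-l})=-(\log D)\,(-\log D^{-l})^{j}D^{-l}$ turns $\partial K_\pm/\partial l_\pm(x_i)$ into the stretched exponential $\exp^{\pm}$, and stationarity of $J_\pm$ with $\gamma$ fixed to absorb the $\log D$ yields $p_\pm(x_i)=\exp^{\pm}(\cdot)$, i.e. $l_\pm^{*}(x_i)=-\log_D^{\pm}(p_\pm(x_i))$, exactly as in the paper's proof. Your reverse-engineering of $K_\pm$ as a tail integral of the generalised exponential is the same computation narrated in the opposite direction, and the extra points you flag (term-by-term convergence, convexity/uniqueness, the $j=0$ reduction to Kraft) are sound additions the paper only treats in passing.
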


\begin{proof}
Since \eqref{entropies} can be expressed in the generalised form, the corresponding Nagumo-Kolmogorov function is $\varphi(x)=x$. Consistently, the average lengths in \eqref{functional1} read as
\begin{equation}
\label{length1}
L_{\pm}=\sum_i^{N} p_\pm(x_i) l_\pm(x_i).
\end{equation}

To find the optimal individual lengths $l_\pm^{*}(x_i)$, we now differentiate \eqref{functional1} with respect to $l_\pm(x_i)$, that is:
\begin{equation}
\frac{\partial J_{\pm}}{\partial l_\pm(x_i)} = \frac{\partial L_{\pm}}{\partial l_\pm(x_i)} + \gamma\frac{\partial K_\pm}{\partial l_\pm(x_i)},
\end{equation}
yet we are looking for a global minimum, it follows that the equality must vanish term to term, therefore
\begin{equation}
\label{method}
\begin{split}
\frac{\partial L_{\pm}}{\partial l_\pm(x_i)} &= \sum_i^{N}p_{\pm}(x_i)\\
&=-\gamma\frac{\partial K_\pm}{\partial l_\pm(x_i)}\\
&=\gamma \log D \sum_i^{N} \exp^{\pm}\left[-l_\pm(x_i)\log D\right]\\
&= \sum_i^{N} \exp^{\pm}\left[-l_\pm(x_i) \log D\right],
\end{split}
\end{equation}
where the Lagrange multiplier has been selected as $\gamma = \frac{1}{\log D}$ and $\exp^{\pm}$ are stretched exponential functions (see Appendix \ref{app}). 

One can observe the equality in \eqref{method} is satisfied for all $x_i \in X$ iff $l_{\pm}^*(x_i)=-\log^{\pm}_{D}\left(p_{\pm}(x_i)\right)$. At this point it should be evident that we have obtained two optimal individual lengths as given by each  probability distribution $p_{\pm}(x_i)$, for that reason we have appended the label $\pm$ to $l_\pm^*(x_i)$, since the lengths and the distributions are univocally related. That completes the proof.
\end{proof}

It follows that the expected lengths $L_{\pm}$ are lower and upper bounded by the entropy measures $H_D^\pm(P)$, which is consistent with the standard formulation. Thus we are entitled to introduce the coding Theorems \ref{t:h1} and \ref{t:h2}.

However, before the formal introduction of such theorems, some comments deserve attention. First, note that the individual lengths $l_\pm$ and the probabilities $p_\pm$ are determined by each other, therefore the constraints \eqref{constraint1} are not arbitrary, but must be constructed such that entropy measures can be recovered at the optimal points.

We also remark that the standard theory is straightforwardly recoverable from our scheme. First note how the Kraft inequality \eqref{kraft} is obtained by truncating $K_\pm$ at first order, that is:
\begin{equation}
\begin{split}
K_\pm&=a_\pm(0)\sum_i^N D^{-l_\pm(x_i)} \\
&\quad+ a_\pm(1)\sum_i^N \Gamma[2,-\log[D^{-l_\pm(x_i)}] \\
&\qquad+ a_\pm(2)\sum_i^N \Gamma[3,-\log[D^{-l_\pm(x_i)}] + \cdots,
\end{split}
\end{equation}
with $a_\pm(0)=1$, see Appendix \ref{app}.

Equivalently the limit $N\to\infty$ means that the system will possess pretty low probabilities due to a high number of accessible states, in that case our proposal will be asymptotically equivalent to the Shannon's theory \cite{paper}, symbolically expressed as $p_S(x)=\lim_{N\to\infty}p_{\pm}(x)$.
 

\begin{theorem}\label{t:h1} The expected lengths defined by Eq. \eqref{length1} for a $D$-ary alphabet regarding the entropies $H_D^{\pm}(P)$, satisfy $L_{\pm}\geq H_D^{\pm}(P)$, with equality iff $l_{\pm}^*(x_i)=-\log^{\pm}_{D}\left(p_\pm(x_i)\right)$ for every $x_i$ in $X$.
\end{theorem}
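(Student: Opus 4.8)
The plan is to lean directly on Proposition \ref{p1}: the variational computation there already locates the stationary point $l_\pm^*(x_i) = -\log^\pm_D(p_\pm(x_i))$ of the functional $J_\pm = L_\pm + \gamma K_\pm$, so the remaining work is to certify that this stationary point is a genuine global minimum and then to read off the bound. I would first dispose of the equality case, which is immediate: substituting $l_\pm^*(x_i) = -\log^\pm_D(p_\pm(x_i))$ into $L_\pm = \sum_i p_\pm(x_i) l_\pm(x_i)$ returns $-\sum_i p_\pm(x_i)\log^\pm_D(p_\pm(x_i)) = H_D^\pm(P)$ by definition \eqref{entropies}. This shows at once that the bound is attained and fixes the form of the minimiser.

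For the inequality itself I would argue by convexity. Since $L_\pm$ is linear in the lengths and the multiplier $\gamma = 1/\log D > 0$, it suffices to show that $K_\pm$ is convex; because \eqref{constraint1} is a sum of terms each depending on a single $l_\pm(x_i)$, the Hessian is diagonal and joint convexity reduces to convexity in each variable. Differentiating the incomplete-gamma terms exactly as in the proof of Proposition \ref{p1} gives $\partial K_\pm/\partial l_\pm(x_i) = -\log D\,\exp^\pm[-l_\pm(x_i)\log D]$, and a second differentiation yields $\partial^2 K_\pm/\partial l_\pm(x_i)^2 = (\log D)^2\,(\exp^\pm)'[-l_\pm(x_i)\log D]$, which is positive because $\exp^\pm$ is strictly increasing (Appendix \ref{app}). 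Hence $J_\pm$ is convex, its stationary point $l_\pm^*$ is a global minimum, and for any feasible assignment $l_\pm$ one has $L_\pm(l_\pm) + \gamma K_\pm(l_\pm) \geq L_\pm(l_\pm^*) + \gamma K_\pm(l_\pm^*)$.

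To convert this into the stated bound I would use that the constraint is active at the optimum, $K_\pm(l_\pm^*) = \text{const.}$, together with feasibility $K_\pm(l_\pm) \leq \text{const.}$ Combining these with $\gamma > 0$ gives $L_\pm(l_\pm) \geq L_\pm(l_\pm^*) + \gamma[K_\pm(l_\pm^*) - K_\pm(l_\pm)] \geq L_\pm(l_\pm^*) = H_D^\pm(P)$, with equality forcing $l_\pm = l_\pm^*$ by strict convexity, which is precisely the claimed equality condition. As in the Shannon case, the lower bound is unaffected by the integrality requirement $l_\pm(x_i)\in\mathbb{Z}_+$, since the integer-feasible assignments form a subset of the real-relaxed ones, over which the minimum is already $H_D^\pm(P)$.

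I expect the main obstacle to be the convexity step, specifically the verification that $\exp^\pm$ is monotone increasing so that $K_\pm$ is convex: this is what must replace the classical log-sum / Jensen argument, which is unavailable here because the generalised logarithm $\log^\pm_D$ does not satisfy the additive identity $\log(xy)=\log x+\log y$. A secondary point requiring care is pinning down the constant in \eqref{constraint1}, which must be chosen so that the constraint is active exactly at $l_\pm^*$; otherwise the final chain of inequalities does not close and the bound cannot be attained.
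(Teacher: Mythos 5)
Your route is genuinely different from the paper's. The paper's proof simply writes the difference $L_{\pm}-H_D^{\pm}(P)=\sum_i p_\pm(x_i)\left[l_\pm(x_i)+\log^{\pm}_D(p_\pm(x_i))\right]$, asserts it is nonnegative, and attributes the termwise inequality $l_\pm(x_i)\geq-\log^{\pm}_D(p_\pm(x_i))$ to the integrality of the lengths; the constraint $K_\pm$ never enters. You instead run the inequality through the Lagrangian: convexity of $J_\pm$ makes the stationary point of Proposition \ref{p1} a global minimum, and feasibility $K_\pm(l_\pm)\leq K_\pm(l_\pm^*)$ together with $\gamma>0$ converts that into $L_\pm\geq L_\pm(l_\pm^*)=H_D^\pm(P)$. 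This is closer in spirit to how the classical theorem is actually proved (there the Kraft inequality is indispensable, via Gibbs/Jensen), and it buys you something the paper's argument does not: an explanation of \emph{why} arbitrary admissible codes, not just ones already satisfying the termwise bound, obey the inequality --- note that integrality alone cannot force $l_\pm(x_i)\geq-\log^{\pm}_D(p_\pm(x_i))$ (take $l=1$ against a right-hand side of $3.5$), so the feasibility constraint must be used somewhere, and your proof is the one that uses it. The price is that your argument is conditional on two facts you correctly flag as open: the sign of $(\exp^{\pm})'$ (hence convexity of $K_\pm$), which is delicate because the coefficients $a_\pm(j)$ in Table \ref{t:apm} alternate in sign and the paper only defines $\exp^{\pm}$ through a series; and the choice of the constant in \eqref{constraint1} so that the constraint is active at $l_\pm^*$. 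Neither point is settled in the paper either, so your proof is not weaker than the published one --- it merely makes explicit the hypotheses the published proof leaves implicit. Also note that strict convexity is needed for your ``equality forces $l_\pm=l_\pm^*$'' step; without it you only recover equality on the support of $p_\pm$, which is in fact all the theorem claims.
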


\begin{proof} It follows directly by writing the difference between the expected lengths and the entropies. Then one gets
\begin{equation}
\begin{split}
L_{\pm}-H_D^{\pm}(P) &= \sum_i^{N} p_\pm(x_i) l_\pm(x_i) \\
&\hspace{1.3cm}+ \sum_i^{N} p_\pm(x_i) \log^{\pm}_{D}\left(p_\pm(x_i)\right)\\
& = \sum_i^{N} p_\pm(x_i)\left[l_\pm(x_i) + \log^{\pm}_{D}\left(p_\pm(x_i)\right)\right]\\
& \geq 0,
\end{split}
\end{equation}
necessarily leading to $l_\pm(x_i)\geq -\log^{\pm}_{D}\left(p_\pm(x_i)\right)$, for the reason that every $l_\pm(x_i)$ is an integer. Then, it follows that the equality is attained iff the individual lengths $l_\pm(x_i)=l_\pm^*(x_i)$ are optimal. And the theorem is demonstrated.
\end{proof}

Likewise, the entropies $H_D^{\pm}$ amount to a lower bound on  the expected lengths $L_{\pm}$, yet as we are to show these lengths are within one {\em dit} of the lower bound as well.

\begin{theorem}\label{t:h2} For a $D$-ary alphabet and a source distribution $X$ let $l_\pm(x_i)$ be the optimal individual lengths that solve the optimisation problem \eqref{functional1}, where the associated mean lengths are defined by Eq. \eqref{length1}. Then
\begin{equation}
\label{upper}
H_D^{\pm}(P)\leq L_{\pm}< H_D^{\pm}(P) + 1.
\end{equation}
\end{theorem}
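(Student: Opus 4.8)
The plan is to establish the two inequalities in \eqref{upper} separately. The left-hand bound $H_D^{\pm}(P)\leq L_{\pm}$ is already delivered by Theorem \ref{t:h1}, so the whole effort goes into the strict upper bound. I would mimic the classical Shannon argument. The optimal real-valued lengths furnished by Proposition \ref{p1} are $l_\pm^*(x_i)=-\log^{\pm}_{D}(p_\pm(x_i))$, which are in general not integers, so instead I would commit to the integer code lengths
\begin{equation*}
l_\pm(x_i)=\left\lceil -\log^{\pm}_{D}(p_\pm(x_i))\right\rceil,
\end{equation*}
that is, the smallest admissible integer lengths dominating the optimal ones.

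From the defining property of the ceiling one has $-\log^{\pm}_{D}(p_\pm(x_i))\leq l_\pm(x_i)<-\log^{\pm}_{D}(p_\pm(x_i))+1$ for every $x_i$. Multiplying the right inequality by $p_\pm(x_i)\geq0$, summing over $i$, and invoking the definition \eqref{entropies} together with the normalisation $\sum_i^{N}p_\pm(x_i)=1$ would then yield
\begin{equation*}
\begin{split}
L_{\pm}&<-\sum_i^{N}p_\pm(x_i)\log^{\pm}_{D}(p_\pm(x_i))+\sum_i^{N}p_\pm(x_i)\\
&=H_D^{\pm}(P)+1,
\end{split}
\end{equation*}
which is precisely the sought upper bound. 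Combined with Theorem \ref{t:h1}, the chain \eqref{upper} follows at once.

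The step I expect to be the genuine obstacle is feasibility: one must verify that the rounded-up lengths still respect the generalised constraint \eqref{constraint1}, for otherwise no prefix-free code realising them is guaranteed to exist. I would dispose of this by a monotonicity argument that reuses the computation already carried out in the proof of Proposition \ref{p1}. There it was found that $\partial K_\pm/\partial l_\pm(x_i)=-\log D\,\exp^{\pm}[-l_\pm(x_i)\log D]$, and since $\log D>0$ while the stretched exponential is nonnegative on the relevant domain, $K_\pm$ is nonincreasing in each individual length. As the optimal real lengths $l_\pm^*(x_i)$ saturate the constraint, replacing each by its ceiling $l_\pm(x_i)\geq l_\pm^*(x_i)$ can only decrease $K_\pm$, so the integer assignment stays inside the feasible region. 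Equivalently, at the level of the series \eqref{constraint1} this is the statement that every term $\Gamma[j+1,\,l_\pm(x_i)\log D]$ is decreasing in its lower limit; the delicate point is the sign structure of the coefficients $a_\pm(j)$ from Appendix \ref{app}, which is exactly what the closed form of $\partial K_\pm/\partial l_\pm(x_i)$ allows me to bypass.
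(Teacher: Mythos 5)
Your argument coincides with the paper's own proof: both replace the optimal real-valued lengths by their integer ceilings $l_\pm(x_i)=\lceil-\log^{\pm}_{D}(p_\pm(x_i))\rceil$, apply the defining inequality of the ceiling, multiply by $p_\pm(x_i)$, sum over $i$, and invoke the normalisation of the probabilities. The feasibility check you append --- that the rounded-up lengths still satisfy the generalised constraint \eqref{constraint1}, argued via the monotonicity of $K_\pm$ in each individual length --- is a sensible strengthening that the paper's proof omits entirely.
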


\begin{proof} As stated by Theorem \ref{t:h1}, the choice of codeword lengths $l_\pm^*(x_i)=-\log^{\pm}_{D}\left(p_\pm(x_i)\right)$ results in $L_{\pm}^*=H_D^{\pm}$. However, to assure that every $l_{\pm}(x_i)$ is an integer, then we shall take $l_{\pm}(x_i)=\left\lceil-\log^{\pm}_{D}\left(p_{\pm}(x_i)\right)\right\rceil$. In that way the individual lengths do really  satisfy
\begin{equation}
\label{ineq1}
-\log^{\pm}_{D}\left(p_\pm(x_i)\right) \leq l_\pm(x_i) < - \log^{\pm}_{D}\left(p_\pm(x_i)\right) + 1,
\end{equation}
multiplying each member by $p_\pm(x_i)$ and summing over all $x_i$, leads to
\begin{equation}
\begin{split}
-\sum_i^{N}p_\pm(x_i)\log^{\pm}_{D}\left(p_\pm(x_i)\right)  & \leq \sum_i^{N} p_\pm(x_i)l_\pm(x_i) \\
& < - \sum_i^{N} p_\pm(x_i)\log^{\pm}_{D}\left(p_\pm(x_i)\right) \\
&\hspace{1.5cm} +\sum_i^{N}p_\pm(x_i),
\end{split}
\end{equation}
then from Eqs. \eqref{entropies} and \eqref{length1} together with  the normalisation of the probability, we finally arrive at the expression
\begin{equation}
H_D^{\pm}(P)\leq L_{\pm}< H_D^{\pm}(P) + 1,
\end{equation}
and we have the theorem.
\end{proof} 

We would like to remark that the expected lengths $L_{\pm}$ satisfy $H_D^{\pm} \leq L_{\pm} < H_D^{\pm} + 1$. Nonetheless the optimal code in view of $H_D^{\pm}$ can only be better that the prescribed by $L_{\pm}$, thus one is addressed to the Theorem \ref{t:h2}.

\begin{figure}[!h]
	\centering
	\includegraphics[width=0.5\textwidth]{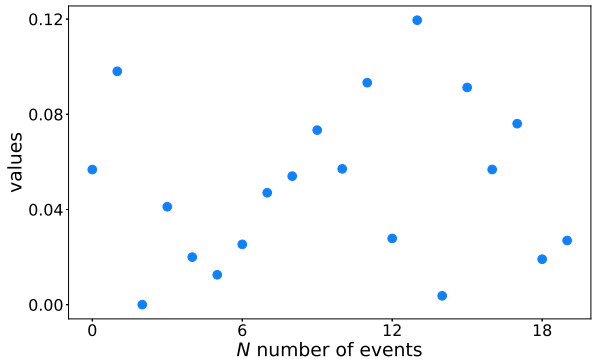}
	\caption{Random processes for $N = 20$ events.}
	\label{f:process1}
\end{figure}

\begin{figure}[!h]	
	\centering
	\includegraphics[width=0.5\textwidth]{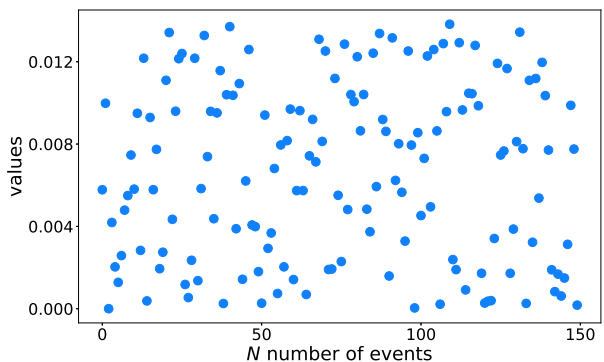}
	\caption{Random processes for $N = 150$ events.}
	\label{f:process2}
\end{figure}

To illustrate our discussion, we have generated two random processes with uncorrelated sources, that we shall use as datasets to compute the average lengths $L\pm$ as well as $L_S$,  the usual length in information theory \cite{feinstein}.

Considering the random process in Fig. \ref{f:process1}, with a binary alphabet $D=2$, we have obtained an average length  $L_S^*$=4.6bits, whereas the average lengths regarding entropies $H^{+}(P)$ and $H^{-}(P)$ are $L_{+}^*$=4.1bits and $L_{-}^*$=4.75bits, respectively. That means that a more efficient transmission process would result from the entropy $H^{+}(P)$ in comparison to a code compressed via $H^S(P)$. However as the number of random events increases, the lengths $L_{\pm}^*$ tend to coincide with $L_S^*$. For instance, with respect to the process in Fig. \ref{f:process2}, the standard average length is $L_S^{*}$=7.42bits, while $L_{+}^*$=7.3bits and $L_{-}^*$=7.49bits, hence diminishing the difference between $L_S^*$ and $L_{\pm}^*$. This is a natural consequence since the entropies $H_D^{\pm}(P)$ converge asymptotically to $H_D^S(P)$, see Ref. \cite{paper}, but maintaining a subtle difference between $H_D^{\pm}(P)$ and $H_D^S(P)$ whenever the number of events is reasonably small.

%
%
%
%
\section{Channel capacity}
\label{s:capacity}

It would be apparently simple to expand the capacity of a communication system by merely increasing the number of different  signalling events transiting from $A$ to $B$. Why not sending  indiscriminately thousands of different voltages per symbol across the communication channel? As the number of different signals grows at $A$, a special difficulty arises at $B$. All these emitted signals are limited by the environmental noise that cannot be entirely avoided. Thus the difference between signalling events shall be greater than the intrinsic noise level to keep ambiguities away from the recovered signals by the acts of $B$.
 
To circumvent this problem, the notion of {\it channel capacity} becomes fundamental to know the maximum amount of symbols per unit time, $n$, that can be emitted and the many differences per symbols, $s$, that can be selected from the message space $s^n$.

This leads to a decision tree on the actual possibilities that spans the message space, therefore, producing a probability distribution on the codewords computed at $B$. Yet, a one-to-one relation between the input and the output is not given a priori, i.e for two different inputs we can obtain the same outcome, which makes the signals to look ambiguous. A possible choice of unambiguous inputs is realisable with a BSC, permitting a signal reconstruction with an error of no importance.

We shall use such scheme in the next few paragraphs to determine the maximum rate at which the signal can be reconstructed in terms of the entropies \eqref{entropies}, leading to a new set of bounds on the channel capacity due to Shannon.

\begin{definition}
Let $P$ and $Q$ be two different probability distributions. The amount of information shared between them is measured via the mutual information $I(P,Q)=H_D(Q) - H_D(Q\vert P)$, where $H_D(Q\vert P)$ is the conditional entropy. Then the channel capacity is defined as:
\begin{equation}
\label{capacity}
C=\max_P I(P,Q).
\end{equation}
\end{definition}

The BSC, see the diagram below, is one the simplest models of communication channels with errors. Each input is complemented with probability $p$.

\tikzset{every picture/.style={line width=0.75pt}}
\begin{tikzpicture}[x=0.8pt,y=0.75pt,yscale=-1,xscale=1]
\draw (231.2,103.6) -- (470.7,103.6) ;
\draw [shift={(472.7,103.6)}, rotate = 180] [color={rgb, 255:red, 0; green, 0; blue, 0 }  ][line width=0.75]    (10.93,-4.9) .. controls (6.95,-2.3) and (3.31,-0.67) .. (0,0) .. controls (3.31,0.67) and (6.95,2.3) .. (10.93,4.9)   ;
\draw    (231.2,193.6) -- (470.7,193.6) ;
\draw [shift={(472.7,193.6)}, rotate = 180] [color={rgb, 255:red, 0; green, 0; blue, 0 }  ][line width=0.75]    (10.93,-4.9) .. controls (6.95,-2.3) and (3.31,-0.67) .. (0,0) .. controls (3.31,0.67) and (6.95,2.3) .. (10.93,4.9)   ;
\draw    (231.2,103.6) -- (470.83,192.9) ;
\draw [shift={(472.7,193.6)}, rotate = 200.44] [color={rgb, 255:red, 0; green, 0; blue, 0 }  ][line width=0.75]    (10.93,-4.9) .. controls (6.95,-2.3) and (3.31,-0.67) .. (0,0) .. controls (3.31,0.67) and (6.95,2.3) .. (10.93,4.9)   ;
\draw    (231.2,193.6) -- (470.83,104.3) ;
\draw [shift={(472.7,103.6)}, rotate = 519.56] [color={rgb, 255:red, 0; green, 0; blue, 0 }  ][line width=0.75]    (10.93,-4.9) .. controls (6.95,-2.3) and (3.31,-0.67) .. (0,0) .. controls (3.31,0.67) and (6.95,2.3) .. (10.93,4.9)   ;

\draw (265,125) node  [font=\footnotesize]  {$p$};
\draw (265,167) node  [font=\footnotesize]  {$p$};
\draw (348,93.6) node  [font=\footnotesize]  {$1-p$};
\draw (348,202.6) node  [font=\footnotesize]  {$1-p$};
\draw (221,192.6) node  [font=\footnotesize]  {$1$};
\draw (487,192.6) node  [font=\footnotesize]  {$1$};
\draw (221,103.6) node  [font=\footnotesize]  {$0$};
\draw (487,103.6) node  [font=\footnotesize]  {$0$};
\end{tikzpicture}

In case that an error occurs, an input value 1 (0) will be regarded as 0 (1), hence we are unable to identify those error bits and at the end we get an untrusted set of messages. For that reason, we shall assume that every bit sent by $A$ has a negligible probability of error.

To calculate the channel capacity of a BSC, first note the mutual information $I(P,Q)$ is bounded by \cite{cover}
\begin{equation}
I(P,Q)\leq 1 - H(P),
\end{equation}
(henceforth all operations are binary, $D=2$, thus we preferably  drop such label from our notation), observe also, that the equality is attained when a uniform distribution feeds the input, it follows from \eqref{capacity} that 
\begin{equation}
\label{binary}
C_\text{BSC}=1-H(P),
\end{equation}
and we have the general expression for the channel capacity of a BSC.


\begin{figure}[!h]	
	\centering
	\includegraphics[width=0.5\textwidth]{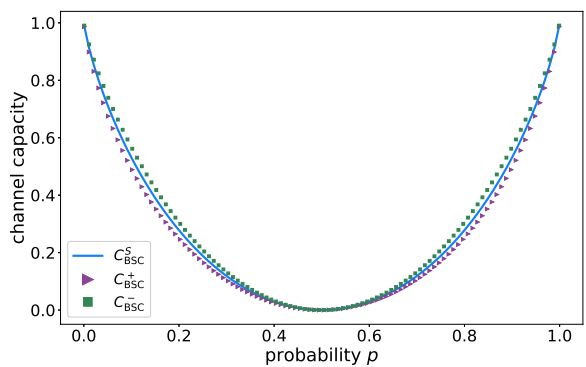}
	\caption{Capacity of a BSC in terms of entropy measures that depend only on the probability distribution. The entropy $H^-(P)$ estimates a slightly greater capacity than the one obtained from the Shannon's entropy.}
	\label{f:cap}
\end{figure}

Particularly, the channel capacity \eqref{binary} of a BSC communication system whose information weight is the Shannon's entropy, reads
\begin{equation}
\label{cs}
C_\text{BSC}^S=1+ p\log p+(1-p)\log(1-p),
\end{equation}
where the logarithms are base 2.

Likewise, we can determine the channel capacity of a BSC in terms of the entropies $H_D^\pm(P)$, which yields
\begin{equation}
\label{cp}
C_\text{BSC}^+=\frac{\sqrt{2} - p^p - (1-p)^{1-p}}{\sqrt{2}-2},
\end{equation}
and
\begin{equation}
\label{cm}
C_\text{BSC}^-=\frac{2\sqrt{2} p^{-p} - (1-p)^{-(1-p)}}{2\sqrt{2}-2},
\end{equation}
where we have normalised the capacities $C_\text{BSC}^\pm$ to be comparable with the standard case $C_\text{BSC}^S$, see Fig. \ref{f:cap}. 

In particular, the channel capacity $C_\text{BSC}^-$ exhibits a modest improvement with respect to $C_\text{BSC}^S$, indicating a tentative upper limit, above Shannon's, on how many bits can be transmitted per second over the channel without errors. Consistently, at $p=\frac{1}{2}$ the three capacities coincide, since at this point occurs the highest degree of uncertainty, resembling the initial situation in which one cannot form any judgment from the received bits at $B$.

Another interesting scenario that we shall discuss is that of a BEC: A situation in which a fraction $\alpha$ of bits is erased or lost during the transmission process, but the receiver actually knows those bits. As in the case of BSC, there are two inputs but now there will be three outputs:


\tikzset{every picture/.style={line width=0.75pt}}       
\begin{tikzpicture}[x=0.75pt,y=0.75pt,yscale=-1,xscale=1]
\draw    (224,100) -- (474,99.5) ;
\draw [shift={(476,99.5)}, rotate = 539.89] [color={rgb, 255:red, 0; green, 0; blue, 0 }  ][line width=0.75]    (10.93,-4.9) .. controls (6.95,-2.3) and (3.31,-0.67) .. (0,0) .. controls (3.31,0.67) and (6.95,2.3) .. (10.93,4.9)   ;
\draw    (224,201) -- (474,200.5) ;
\draw [shift={(476,200.5)}, rotate = 539.89] [color={rgb, 255:red, 0; green, 0; blue, 0 }  ][line width=0.75]    (10.93,-4.9) .. controls (6.95,-2.3) and (3.31,-0.67) .. (0,0) .. controls (3.31,0.67) and (6.95,2.3) .. (10.93,4.9)   ;
\draw    (224,100) -- (472.04,149.11) ;
\draw [shift={(474,149.5)}, rotate = 191.2] [color={rgb, 255:red, 0; green, 0; blue, 0 }  ][line width=0.75]    (10.93,-4.9) .. controls (6.95,-2.3) and (3.31,-0.67) .. (0,0) .. controls (3.31,0.67) and (6.95,2.3) .. (10.93,4.9)   ;
\draw    (224,201) -- (472.04,149.9) ;
\draw [shift={(474,149.5)}, rotate = 528.36] [color={rgb, 255:red, 0; green, 0; blue, 0 }  ][line width=0.75]    (10.93,-4.9) .. controls (6.95,-2.3) and (3.31,-0.67) .. (0,0) .. controls (3.31,0.67) and (6.95,2.3) .. (10.93,4.9)   ;
\draw (210,101) node  [font=\footnotesize]  {$0$};
\draw (484,93.5) node [anchor=north west][inner sep=0.75pt]  [font=\footnotesize]  {$0$};
\draw (210,201) node  [font=\footnotesize]  {$1$};
\draw (484,193.5) node [anchor=north west][inner sep=0.75pt]  [font=\footnotesize]  {$1$};
\draw (490,149.5) node  [font=\footnotesize]  {$*$};
\draw (361.5,114) node [anchor=north west][inner sep=0.75pt]  [font=\footnotesize]  {$\alpha $};
\draw (361.5,175) node [anchor=north west][inner sep=0.75pt]  [font=\footnotesize]  {$\alpha $};
\draw (343,85.5) node [anchor=north west][inner sep=0.75pt]  [font=\footnotesize]  {$1-\alpha $};
\draw (343,205) node [anchor=north west][inner sep=0.75pt]  [font=\footnotesize]  {$1-\alpha $};

\end{tikzpicture}

This diagram means that there is an input $X$, which emits 0s and 1s with probabilities $p$ and $1-p$, respectively. Whereas the output casts 0s, 1s or $*$s with probabilities $(1-\alpha)p$, $(1-\alpha)(1-p)$ or $\alpha$, in that order. Furthermore, if the source emits a 0 or 1, the probabilities of receiving these bits without errors are $(1-\alpha)p$ and $(1-\alpha)(1-p)$, otherwise the probabilities that these bits become interchanged over the communication channel are $p\alpha$ and $(1-p)\alpha$.

It follows that the channel capacity of a BEC with an erasure probability $\alpha$, is calculated as:
\begin{equation}
\label{error}
C_\text{BEC}=\max_P\, H(X) + H(Y) - H(X,Y),
\end{equation}
where $H(X,Y)$ is the joint entropy computed in terms of the joint distributions, which alternatively can be done via transition matrices, see for instance \cite{ilic}.

In the case of the Shannon's entropy, the calculation of the  capacity \eqref{error} gives:
\begin{equation}
\label{erasure}
C_\text{BEC}^S=\max_P\,(1-\alpha)H^S(P), \quad P=\{p,1-p\},
\end{equation}
since the entropy attains its maximum for a uniform distribution, $p=\frac{1}{2}$, the channel capacity of BEC reduces to $C^S=1-\alpha$, a well known result in the Shannon's theory.

\begin{figure}[!h]	
	\centering
	\includegraphics[width=0.5\textwidth]{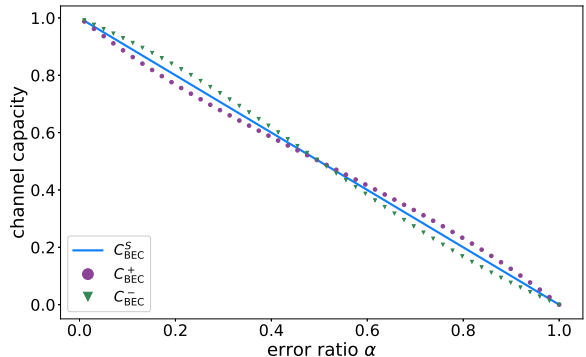}
	\caption{Channel capacity of a BEC, for entropy measures that depend only on the distribution $P$.}
	\label{f:erasure}
\end{figure}

It shall not escape attention, that the channel capacities \eqref{cs} and \eqref{erasure} are  completely additive, that is, given two distributions $p_1$ and $p_2$, we have $C^S(p_1p_2)=C^S(p_1)+C^S(p_2)$, a direct result that follows from $C^S(p_1p_2)=\text{max}_{p_{X_1,X_2}} I(X_1,X_2;Y_1,Y_2)$ with $I(X_1,X_2;Y_1,Y_2)=I(X_1,Y_1)+I(X_2,Y_2)$. Although this property will not be satisfied by the non-extensive entropies $H^\pm(P)$, in which  case the additivity is asymptotically achieved, rather we are not interested in such aspects but precisely in the non-additive consequences and possible applications, if any.

In turn, we shall compute the channel capacity of a BEC but in terms of \eqref{entropies}. As in the example of a BSC, we have to renormalise the corresponding capacities to compare with the  standard result \eqref{error}. We get:
\begin{equation}
C_\text{BEC}^+=\max_P\,\frac{1-\alpha^\alpha+H^+(P) - H^+(P\alpha)}{2-\sqrt{2}},
\end{equation}
and
\begin{equation}
C_\text{BEC}^-=\max_P\,\frac{\alpha^{-\alpha}-1+H^-(P)-H^-(P\alpha)}{2\sqrt{2}-2},
\end{equation}
as in the case \eqref{erasure}, the maximum will be attained at $p=\frac{1}{2}$. 

See Fig. \ref{f:erasure}, where we have plotted the channel capacities $C_\text{BEC}^S$ and $C_\text{BEC}^\pm$.  Interestingly $C_\text{BEC}^\pm$ behave such that the three measures coincide at the critical points $p=0,p=1$ and $p=\frac{1}{2}$, but $C_\text{BEC}^\pm$ flips their character at $p=\frac{1}{2}$, which shows the flexibility of our approach.

The flipping behaviour of $C_\text{BEC}^\pm$ is interpreted as follows. Note that $C_\text{BEC}^-$ will establish an upper bound on $C_\text{BEC}^S$ as long as the fraction of errors $\alpha$ does not dominate the communication channel, i.e. $<50\%$. Otherwise there is a tradeoff and $C_\text{BEC}^-$ will be slightly reduced immediately after the threshold at $p=\frac 1 2$. For the same reason, the lower bound suggested by $C_\text{BEC}^+$, will eventually be promoted to an upper bound, with respect to $C_\text{BEC}^S$, when the ratio $\alpha$ dominates the channel ($>50\%$). 


%
%
%
%
\section{Conclusions and Outlook}
\label{s:conclusions}

In the first part of this paper we have utilised a variational method to minimise the average length of a given prefix-free code on the basis of the non-extensive entropy measures $H_D^\pm(P)$, which depend only on the distribution. We have shown that such optimisation problem is solved as long as the adequate constraint is introduced ---in our case $K_\pm$.

These constraints defined in Prop. \ref{p1} are not arbitrary, provided the average lengths $L_\pm$ must equal the entropies $H_D^\pm(P)$ when they reach their optimal values.

In consequence we have introduced the Theorems \ref{t:h1} and \ref{t:h2}, that account for an effective data compression at different average rates, either in terms of $H_D^+(P)$ or $H_D^-(P)$. Therefore the corresponding average lengths $L_{\pm}$ must be bounded from below and above in terms of the information measures themselves and the associated optimal individual lengths $l_{\pm}(x_i)$.

Specifically, we think that $H_D^{+}(P)$ could bring some novelty with respect to the data compression estimated via the Shannon's entropy $H_D^S(P)$, iff the involved system possess only few accessible states. We suggest that such efficiency, however, could be prolonged for even a larger system if one pinpoints a reasonable way to divide it into a number of subsystems in order to outperform successively the data compression.

In the second part, we have analysed the channel capacity of two simple and generic models, a BSC and a BEC, regarding the entropies $H_D^\pm(P)$. We have obtained a channel capacity of a BSC higher than the calculated using the Shannon's theory, $C_\text{BSC}^-\geq C_\text{BSC}^S$, where the equality is attained at the critical points $p=0, p=1$ and $p=\frac{1}{2}$.

Yet an interesting behaviour occurs in the capacities of a BEC. We have obtained a twofold behaviour in each of the capacities $C_\text{BEC}^+$ and $C_\text{BEC}^-$. The first one estimates a lower bound on $C_\text{BEC}^S$ but only if the population of errors is below the $50\%$ of the received bits, otherwise it will establish a new bound, but now above $C_\text{BEC}^S$, whereas the inverse situation is furnished by $C_\text{BEC}^-$. We recall that such capacities are not strictly computed as functions of some distribution $P$, but depend now on a new parameter $\alpha$ that represents the ratio of errors over the communication channel.

%
%
%
%
\section*{Acknowledgments}
The authors appreciate the early discussions and observations given by H. Garc\'ia-Compe\'an. We also are grateful for the general comments provided by J.L. L\'opez. J.F. would like to thank the financial support granted by CONACYT (Mexico). O.O. thanks the support of CONACYT Project 257919, UG Projects and PRODEP.
%
%
%
%
\section*{Data Availability}
The data that support the findings of this study are available from the corresponding author upon reasonable request.
%
%
%
%
\appendix
\section{Generalised logarithms and exponentials}
\label{app}

The generalised logarithm functions define as
\begin{equation}
\label{logs}
\begin{split}
\log^{+}(x) & \equiv-\frac{1-x^x}{x}\\
\log^{-}(x) & \equiv -\frac{x^{-x}-1}{x},
\end{split}
\end{equation}
for $x>0$, otherwise the functions become undefined. From such definitions it becomes evident that the functions $\log^{(\pm)}$ do not fulfil the three laws of logarithms. In addition, the corresponding inverse functions of the generalised logarithms \eqref{logs} do not posses a closed form hence, to subdue this technicality, a numerical representation has to be taken into account. These functions have been constructed as
\begin{equation}
\label{exps}
\exp^{\pm}(x)\equiv\exp(-x)\sum_{j=0}^\infty a_{\pm}(j) x^{j}, \quad a_\pm(j) \in \mathbb{R},
\end{equation}
whose first nine coefficients $a_\pm(j)$ are given in Table \ref{t:apm}.

\vspace{0.25cm}
\renewcommand{\arraystretch}{1.4}
\begin{table}[!htp]
\centering
\begin{tabular}{r>{\raggedleft}p{0.25\linewidth}>{\raggedleft\arraybackslash}p{0.25\linewidth}}
\hline
& $a_{+}(j)$ & $a_{-}(j)$ \\ \hline
$j=8$ & -0.000157095 & 0.000105402 \\
$j=7$ & 0.00373467   & -0.00211934 \\
$j=6$ & -0.0362676   & 0.0166679 \\
$j=5$ & 0.186358     & -0.0675544 \\
$j=4$ & -0.546751    & 0.16867 \\
$j=3$ & 0.905157     & -0.317048 \\
$j=2$ & -0.709322    & 0.3725 \\
$j=1$ & 0.0228963    & 0.0147449 \\
$j=0$ & 1 & 1 \\ \hline
\end{tabular}
\caption{$a^{(\pm)}$ Coefficients.}
\label{t:apm}
\end{table}
%
%
%
%
\newpage
\bibliography{bibliography}
\end{document}